\newtheorem{proposition}{Proposition}
\newtheorem{lemma}{Lemma}
\newtheorem{corollary}{Corollary}
\theoremstyle{definition}
\newtheorem{remark}{Remark}
\newtheorem{example}{Example}
\newtheorem{definition}{Definition}
\newcommand{\e}{{\rm e}}
\definecolor{darkgreen}{rgb}{0,0.4,0.3}
\newcommand{\red}[1]{#1}
\newcommand{\R}{\mathbb{R}} 
\newcommand{\C}{\mathbb{C}} 
\newcommand{\T}{\mathbb{T}} 
\newcommand{\real}{\mathbb R} 
\newcommand{\complex}{\mathbb C}
\newcommand{\torus}{\mathbb T}
\newcommand{\Z}{\mathbb Z} 
\newcommand{\frecc}{\to}
\newcommand{\hi}{\mathcal{H}} 
\newcommand{\hh}{\mathcal{H}} 
\newcommand{\vv}{\mathcal{V}} 
\newcommand{\kk}{\mathcal{K}} 
\renewcommand{\aa}{\mathcal{A}} 
\newcommand{\hik}{\mathcal{K}} 
\newcommand{\lh}{\mathcal{L(H)}} 
\newcommand{\elle}[1]{\mathcal{L} \left( #1 \right)} 
\newcommand{\spanno}[1]{{\rm span}\,\left\{ #1 \right\}}
\newcommand{\ip}[2]{\left\langle\,#1\,|\,#2\,\right\rangle} 
\newcommand{\kb}[2]{|#1\rangle\langle#2|} 
\newcommand{\no}[1]{\left\|#1\right\|} 
\newcommand{\tr}[1]{{\rm tr}\left[#1\right]} 
\newcommand{\id}{\mathbbm{1}} 
\newcommand{\rank}[1]{{\rm rank}\,#1} 
\newcommand{\mat}[1]{\left( \begin{array}{cc} #1 \end{array} \right)} 
\newcommand{\mattt}[1]{\left( \begin{array}{ccc} #1 \end{array} \right)} 
\newcommand{\Mo}{\mathsf{M}}
\renewcommand{\S}{\mathcal{S}} 
\begin{document}

\title[Minimal covariant observables identifying all pure states]{Minimal covariant observables identifying all pure states}

\author{Claudio Carmeli}
\email{claudio.carmeli@gmail.com}
\affiliation{D.I.M.E., Universit\`a di Genova, Via Cadorna 2, I-17100 Savona, Italy, and I.N.F.N., Sezione di Genova, Via Dodecaneso, 33, I-16146 Genova, Italy}

\author{Teiko Heinosaari}
\email{teiko.heinosaari@utu.fi}
\affiliation{Turku Centre for Quantum Physics, Department of Physics and Astronomy, University of Turku, Finland}

\author{Alessandro Toigo}
\email{alessandro.toigo@polimi.it}
\affiliation{Dipartimento di Matematica, Politecnico di Milano, Piazza Leonardo da Vinci 32, I-20133 Milano, Italy, and I.N.F.N., Sezione di Milano, Via Celoria 16, I-20133 Milano, Italy}

\pacs{03.65.Aa, 03.65.Wj, 03.65.Fd}

\begin{abstract}
It has been recently shown that an observable that identifies all pure states of a $d$-dimensional quantum system has minimally $4d-4$ outcomes or slightly less (the exact number depending on the dimension $d$) \cite{HeMaWo11}.
However, no simple construction of this type of observable with minimal number of outcomes is known.
In this work we investigate the possibility to have a covariant observable that identifies all pure states and has minimal number of outcomes for this purpose.
It is shown that the existence of these kind of observables depends on the dimension of the Hilbert space.
The fact that these kind of observables fail to exist in some dimensions indicates that the dual pair of observables -- pure states lacks the symmetry that the dual pair of observables -- states has.
\end{abstract}

\maketitle

\section{Introduction}\label{sec:intro}

An observable is called \emph{informationally complete} if it identifies all states \cite{Prugovecki77}.
An informationally complete observable acting on a finite dimensional Hilbert space of dimension $d$ is called \emph{minimal} if it has as small number of outcomes as possible, and this smallest number is known to be $d^2$ \cite{CaFuSc02JMP}.

Often, it is enough to concentrate on pure states.
For instance, we may have a prior knowledge that the otherwise unknown state is pure and we want to identify it.
A measurement that identifies all pure states may not be able to identify all states.
In particular, this kind of measurement can have have less outcomes than $d^2$ whenever $d\geq 3$.
The minimal number of outcomes is roughly $4d-4$ and always less or equal to this \cite{HeMaWo11}.
In spite of the quite precise knowledge about the minimal number of outcomes, a simple or instructive construction of an observable that identifies all pure states and has minimal number of outcomes for this purpose is not known.

The present paper concentrates on minimal \emph{covariant} observables identifying all pure states.
To explain the problem, we start by recalling the standard construction of a minimal covariant observable identifying all states.
First, fix an orthonormal basis $\{h_0,\ldots,h_{d-1}\}$ for a $d$-dimensional Hilbert space $\hi_d$ and let $U$ be the shift operator in this basis, i.e., $Uh_\ell = h_{\ell+1}$, with summation modulo $d$.
Then $j\mapsto U^j$ is a unitary representation of the cyclic group $\Z_d$.
We then define $V$ to be the unitary operator $Vh_\ell = e^{2\pi i \ell /d} h_\ell$, hence giving another unitary representation $k\mapsto V^k$ of the group $\Z_d$. 
The operators $U$ and $V$ commute up to a scalar factor, $VU=e^{2\pi i /d} UV$, thus the combined map $W(j,k):=U^jV^k$ is a projective unitary representation of the product group $\Z_d\times\Z_d$.
By fixing a positive operator $M$ satisfying $\tr{M}=1/d$ and setting
\begin{equation}
\Mo(j,k):=W(j,k) M W(j,k)^\ast \, ,  \qquad j,k\in\Z_d
\end{equation}
we obtain an observable $\Mo$ with $d^2$ outcomes.
This construction also guarantees that $\Mo$ is \emph{covariant}, i.e., 
\begin{equation}\label{eq:cpso}
W(j',k') \Mo(j,k) W(j',k')^\ast = \Mo(j+j',k+k') 
\end{equation}
for all $(j,k), (j',k')\in\Z_d\times\Z_d$.
In order to obtain informationally complete observable one has to choose $M$ such that $\tr{M W(j,k)}\neq 0$ for all $j,k\in\Z_d$; see e.g. \cite{DaPeSa04}, \cite{CaHeTo12}.

In this paper we address the following question:
Is it possible to construct a covariant observable that identifies all pure states and has the minimal number of outcomes for this purpose?
For dimension $2$ the answer is positive since a qubit observable identifies all pure states only if it identifies all states \cite{HeMaWo11}, thus the previous construction gives an observable with the desired properties.
One could presume that a similar construction should be possible in all dimensions, perhaps using a different symmetry group.
 However, we prove that the answer to the existence question depends on the dimension of the Hilbert space.
More precisely, we construct a class of observables with the required properties in dimension $3$ using the group of unit quaternions, but we show that they do not exist in dimension $7$ (and certain other dimensions).

An interesting additional fact is that while the previously sketched general construction uses an irreducible projective representation of an abelian group, the analogous construction for a pure-state informationally complete observable in dimension $3$ is forced to use a reducible unitary representation of a non-abelian group.

Our investigation is organized as follows.
Sections \ref{sec:pic} and \ref{sec:covariant} give some essential background information on pure-state informationally complete observables and covariant observables, respectively.
In Section \ref{sec:existence} we derive some general conditions for the existence of covariant pure-state informationally complete observables.
Then, in Section \ref{sec:minimal} we add the requirement of minimal number of outcomes and prove the main results.
Finally, in Section \ref{sec:conclusions} we outline our conclusions and sketch some further developments and generalizations.

\section{Pure-state informationally complete observables}\label{sec:pic}

Throughout this paper, by a \emph{Hilbert space} we will always mean a finite dimensional complex Hilbert space. If $\hi$ is a Hilbert space, we denote by $\lh$ the vector space of all linear operators on $\hi$.
It is a Hilbert space itself when equipped with the Hilbert-Schmidt inner product, 
\begin{equation*}
\ip{L_1}{L_2}_{HS}=\tr{L_1^\ast L_2} \, .
\end{equation*}
A quantum state is described by a positive operator $\varrho\in\lh$ satisfying $\tr{\varrho}=1$; it is a pure state if $\varrho$ is a $1$-dimensional projection.

Quantum observables are described and identified with positive operator valued measures (POVMs).
A POVM with finite number of outcomes is a mapping $\Mo$ from a finite set $\Omega$ into positive operators on $\hi$, which is required to satisfy the normalization $\sum_x \Mo(x)=\id$; here $\id$ is the identity operator on $\hi$.
The probability of obtaining an outcome $x$ when the system is in a state $\varrho$ is $\tr{\varrho\Mo(x)}$.

We will be interested on the following two properties.

\begin{definition}
An observable $\Mo$ is called
\begin{itemize}
\item[(a)]  \emph{informationally complete (IC)} if for any two different states $\varrho_1,\varrho_2$, there is at least one outcome $x\in \Omega$ such that $\tr{\varrho_1 \Mo(x)} \neq \tr{\varrho_2 \Mo(x)}$.
\item[(b)]  \emph{pure-state informationally complete (PIC)} if for any two different pure states $\varrho_1,\varrho_2$, there is at least one outcome $x\in \Omega$ such that $\tr{\varrho_1 \Mo(x)} \neq \tr{\varrho_2 \Mo(x)}$.
\end{itemize}
\end{definition}

To be able to write down mathematical criteria for the above properties, we introduce some additional notation.
For each observable $\Mo$, we denote by $\S_\Mo$ the linear span of the range of $\Mo$, i.e., 
\begin{equation*}
\S_\Mo := \spanno{ \Mo(x) : x\in \Omega} = \{ \Sigma_x c_x \Mo(x) : c_x\in \complex \} \, .
\end{equation*}
It is well-known that an observable $\Mo$ is informationally complete if and only if $\S_\Mo=\lh$ \cite{Busch91}.

For any subset $\S\subseteq\lh$, we denote by $\S^\perp$ the orthogonal complement of $\S$ in the Hilbert-Schmidt inner product.
In particular, for each observable $\Mo$ we have
\begin{align*}
\S_\Mo^\perp  = & \{ T\in\lh: \tr{ T^\ast S}=0 \quad\forall S\in \S_\Mo \} \\
 = & \{ T\in\lh: \tr{T\Mo(x)}=0 \quad\forall x\in \Omega \} \, .
\end{align*}
Clearly, $\S_\Mo^\perp$ is a linear subspace of $\lh$ and $\tr{T}=0$ for every $T\in \S_\Mo^\perp$ because 
\begin{equation*}
\tr{T}=\tr{T \id} = \sum_x \tr{T\Mo(x)} = 0 \, .
\end{equation*}

If $\hi$ is $d$-dimensional, then $\lh$ has dimension $d^2$.
Since $\lh$ can be written as $\lh=\S_\Mo \oplus \S_\Mo^\perp$, we obtain
\begin{equation*}
\dim (\S_\Mo) + \dim (\S_\Mo^\perp) = d^2 \, .
\end{equation*}
Notice that if $\Mo$ is injective and the set $\{ \Mo(x) : x\in \Omega\}$ is linearly independent, then $\dim (\S_\Mo)$ is just the number of outcomes of $\Mo$.

We recall the following result from \cite{HeMaWo11}.
\begin{proposition}\label{prop:pic}
An observable $\Mo$ is pure-state informationally complete if and only if every nonzero selfadjoint operator in $\S_\Mo^\perp$ has rank $3$ or more. 
\end{proposition}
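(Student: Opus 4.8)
The plan is to translate the defining condition of pure-state informational completeness directly into a statement about $\S_\Mo^\perp$. By definition, $\Mo$ fails to be PIC precisely when there exist two distinct pure states $\varrho_1,\varrho_2$ with $\tr{\varrho_1\Mo(x)}=\tr{\varrho_2\Mo(x)}$ for every outcome $x$. Writing $T=\varrho_1-\varrho_2$, this is equivalent to saying that the nonzero selfadjoint operator $T$ belongs to $\S_\Mo^\perp$. Thus the entire statement reduces to understanding which selfadjoint operators in $\S_\Mo^\perp$ can be realized as a difference of two distinct pure states, and the claim amounts to showing that these are exactly the operators of rank $1$ or $2$, while rank $1$ is automatically ruled out.

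First I would record the trace constraint already established above: every $T\in\S_\Mo^\perp$ satisfies $\tr{T}=0$. A nonzero selfadjoint operator of rank $1$ has the form $c\,\kb{\psi}{\psi}$ with $c\neq 0$, hence trace $c\neq 0$; so $\S_\Mo^\perp$ can contain no nonzero selfadjoint operator of rank $1$. This collapses the dichotomy ``rank $\geq 3$'' versus ``rank $\leq 2$'' to the single question of whether $\S_\Mo^\perp$ harbors a selfadjoint operator of rank exactly $2$.

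The heart of the argument is to match differences of distinct pure states with rank-$2$ trace-zero selfadjoint operators, in both directions. For one direction, I would write $\varrho_1=\kb{\psi}{\psi}$ and $\varrho_2=\kb{\phi}{\phi}$ for unit vectors $\psi,\phi$; distinctness of the projections gives $|\ip{\psi}{\phi}|<1$. Since $T=\varrho_1-\varrho_2$ is supported on $\spanno{\psi,\phi}$, its rank is at most $2$, and the computations $\tr{T}=0$ together with $\tr{T^2}=2(1-|\ip{\psi}{\phi}|^2)>0$ pin its two eigenvalues in that subspace to $\pm\sqrt{1-|\ip{\psi}{\phi}|^2}\neq 0$, so $T$ has rank exactly $2$. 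For the reverse direction, if $T\in\S_\Mo^\perp$ is selfadjoint of rank $2$, then its spectral decomposition together with $\tr{T}=0$ forces eigenvalues $\mu,-\mu$ with $\mu>0$ on orthonormal eigenvectors $\psi,\phi$; hence $T=\mu(\kb{\psi}{\psi}-\kb{\phi}{\phi})$, and because $\S_\Mo^\perp$ is a linear subspace, the difference $\kb{\psi}{\psi}-\kb{\phi}{\phi}=\mu^{-1}T$ of the two orthogonal (hence distinct) pure states also lies in $\S_\Mo^\perp$.

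Assembling these observations yields the equivalence: $\Mo$ fails to be PIC iff $\S_\Mo^\perp$ contains a difference of two distinct pure states, iff (by the two matched directions) $\S_\Mo^\perp$ contains a nonzero selfadjoint operator of rank $2$, iff $\S_\Mo^\perp$ contains a nonzero selfadjoint operator of rank $\leq 2$, since rank $1$ is impossible. Negating gives the proposition. I expect the only mildly subtle point to be the rank-exactly-$2$ claim -- in particular ruling out rank $1$ for a difference of distinct pure states -- which the positivity $\tr{T^2}>0$ settles at once.
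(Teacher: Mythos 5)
Your proof is correct and complete. Note that the paper itself does not prove this proposition---it only cites it from \cite{HeMaWo11}---so there is no in-paper argument to compare against; you have supplied the standard proof. All the key steps check out: the identification of ``not PIC'' with the existence of a nonzero selfadjoint $T=\varrho_1-\varrho_2$ in $\S_\Mo^\perp$, the exclusion of rank $1$ via the trace-zero constraint $\tr{T}=0$ already recorded in the paper, the computation $\tr{T^2}=2(1-|\ip{\psi}{\phi}|^2)>0$ pinning the difference of two distinct pure states to rank exactly $2$, and the converse construction of an orthogonal pair of indistinguishable pure states from any traceless rank-$2$ selfadjoint element of $\S_\Mo^\perp$.
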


If $\S_\Mo^\perp=\{0\}$ (equivalently $\S_\Mo=\lh$), then the condition in Proposition \ref{prop:pic} is satisfied.
But then $\Mo$ is actually informationally complete with respect to all states and must have at least $d^2$ outcomes.

To obtain an observable $\Mo$ that is pure-state informationally complete and has as few outcomes as possible, we obviously need to choose $\S_\Mo$ to have as small dimension as possible while assuring that the condition stated in Proposition \ref{prop:pic} holds. 
The first basic question then arises:
\begin{quote}
If a linear subspace $\S\subseteq\lh$ is given, is it possible to find an observable $\Mo$ such that $\S=\S_\Mo$?
\end{quote}
An easy construction \cite{HeMaWo11} shows that the question has an affirmative answer if and only if
\begin{equation}\label{eq:operator-system}
\id\in \S \qquad \textrm{and} \qquad \S^\ast = \S \, , 
\end{equation}
where the last condition means that $L^\ast\in \S$ whenever $L\in \S$.
Moreover, in this case there exists $\Mo$ with exactly $\dim(\S)$ outcomes such that $\S = \S_\Mo$, but not with less.

A more complicated task is to deduce the smallest possible dimension of a subspace $\S\subseteq\lh$ satisfying \eqref{eq:operator-system} and the criterion of Proposition \ref{prop:pic}, hence giving the minimal number of outcomes for a PIC observable.
It was proved in \cite{HeMaWo11} that the minimal number of outcomes for a PIC observable in $d$-dimensional Hilbert space is $4d-4-\delta(d)$, where $0\leq \delta(d) \leq 2 \log_2(d)$.
The minimal numbers for the dimensions $2-15$ are listed in Table \ref{tab:min}.

\begin{table}
\caption{The minimal number of outcomes for a PIC observable in the dimensions $2-15$.
The exact value is not known in every dimension, but the uncertainty is at most $2 \log_2(d)$.}
\label{tab:min}
\begin{center}
\begin{tabular}{ | c | c |}
\hline
dimension & min $\#$ of outcomes \\
\hline
\hline
2  & 4 \\ 
3 & 8 \\ 
4 & 10 \\ 
5 & 16 \\ 
6 & 18 \\ 
7 & 23 \\ 
8 & 24 or 25 \\
9 & 32 \\
10 & 34 \\
11 & 39 \\
12 & 40 or 41 \\
13 & 47 \\  
14 & 48 or 49 \\
15 & 54 \\
\hline
\end{tabular}
\end{center}
\end{table}

\section{Covariant observables}\label{sec:covariant}

\subsection{Projective unitary representations}

The usual complication in any implementation of symmetry in a quantum system is that we cannot apriori restrict to unitary representations; instead, we have to deal with more general objects of projective unitary representations. 
In this subsection we recall some basic concepts and results related to projective unitary representations of finite groups. 
For more details, we refer to \cite{Isaacs}.

Let $G$ be a finite group.
(In the rest of this paper, all groups are assumed to be finite even if this is not constantly repeated.)
A \emph{projective unitary representation} of $G$ is a mapping $g\mapsto U(g)$ from $G$ into the set of unitary operators on $\hi$ such that $U(e)=\id$ and
\begin{equation}\label{eq:projective-rep}
U(gh)=\omega(g,h)U(g)U(h) \, , 
\end{equation}
with $\omega(g,h)\in\torus$ ($\T$ denoting the group of unimodular complex numbers).
The mapping $\omega: G \times G \to \torus$, defined through \eqref{eq:projective-rep}, is called the \emph{multiplier} of $U$ and it is required to satisfy
\begin{align*}
& \omega(g,e)=\omega(e,g)=1 \quad \forall g\in G \\
& \omega(g,hk)\omega(h,k)=\omega(g,h)\omega(gh,k) \quad \forall g,h,k\in G \, .
\end{align*}
Clearly, a unitary representation of $G$ is a special type of a projective unitary representation; in that case $\omega(g,h)=1$ for all $g,h\in G$.

From the quantum theoretic point of view, only transformation of \emph{rays} of vectors is relevant.
Therefore, if we multiply unitary operators $U(g)$ with numbers $f(g)\in\torus$, we get a new projective representation $g\mapsto f(g)U(g)$ but the transformation of rays has not been changed.
Two projective unitary representations $U$ and $U'$ are called \emph{similar} if there exists a function $f:G\to\torus$ such that $U'(g)=f(g)U(g)$ for all $g\in G$.
A projective unitary representation $U$ is similar to a unitary representation if and only if its multiplier $\omega$ is \emph{exact}, i.e., there exists a function $f:G\to\torus$ such that
\begin{equation}\label{eq:exact}
\omega(g,h)=f(g)f(h)\overline{f(gh)} \qquad \forall g,h\in G \, .
\end{equation}
Indeed, this is equivelent to $U$ being similar to the unitary representation $U'$ given by $U'(g) = f(g)U(g)$

\begin{example}\label{ex:cyclic}(\emph{Cyclic groups})
Every projective unitary representation $U$ of the cyclic group $\Z_d$ is similar to a unitary representation.
Namely, a repeated application of \eqref{eq:projective-rep} shows that, for each $k\in\Z$, there exists $\alpha(k) \in\torus$ such that
\begin{equation}
U(1)^k = \alpha(k) U(k\, ({\rm mod}\, d)) \, .
\end{equation}
Let $\alpha (d) = \e^{i\theta}$, $\theta\in\R$, and, for each $k\in\Z$, define
\begin{equation}\label{eq:ck}
U^\prime(k) := \left(\e^{-\frac{i\theta}{d}} U(1)\right)^k = \e^{-\frac{i k\theta}{d}} \alpha(k) U(k\, ({\rm mod}\, d)) \, .
\end{equation}
Clearly, $U^\prime$ is a unitary representation of $\Z$. Moreover, $U^\prime$ factors to a unitary representation of $\Z_d$ as
$$
U^\prime(d) = \e^{-i\theta} \alpha(d) \id = \id \, .
$$
Finally, $U$ and $U'$ are similar by \eqref{eq:ck}.
\end{example}

In our later investigations the following criterion will be useful.
(Recall that a subspace $\hik\subset\hi$ is called $U$-invariant if $U(g) v \in\hik$ for all $g\in G$ and $v\in\hik$.)

\begin{proposition}\label{prop:exact}
If a projective unitary representation $U$ has an invariant $1$-dimensional subspace, then it is similar to a unitary representation.
\end{proposition}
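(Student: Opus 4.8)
The plan is to invoke the criterion stated immediately before the proposition: a projective unitary representation is similar to a unitary representation precisely when its multiplier $\omega$ is exact, that is, when there exists $f:G\to\torus$ satisfying \eqref{eq:exact}. Thus it suffices to manufacture such an $f$ out of the given invariant line.

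First I would fix a unit vector $v$ spanning the invariant $1$-dimensional subspace $\hik$. Invariance together with one-dimensionality forces $U(g)v$ to be a scalar multiple of $v$ for every $g\in G$, so there is a well-defined function $\chi:G\to\complex$ with $U(g)v=\chi(g)v$; since each $U(g)$ is unitary and $v$ is a unit vector, $\abs{\chi(g)}=1$, i.e.\ $\chi(g)\in\torus$. Next I would read off the identity satisfied by $\chi$ by applying the defining relation \eqref{eq:projective-rep} to $v$: from $U(gh)v=\omega(g,h)U(g)U(h)v$ and the definition of $\chi$ one gets $\chi(gh)v=\omega(g,h)\chi(g)\chi(h)v$, hence $\chi(gh)=\omega(g,h)\chi(g)\chi(h)$ for all $g,h\in G$.

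Finally I would solve this for $\omega$ and identify the trivializing function. Because $\abs{\chi}=1$, the relation rearranges to $\omega(g,h)=\chi(gh)\overline{\chi(g)}\,\overline{\chi(h)}$, so the choice $f:=\overline{\chi}$ gives $f(g)f(h)\overline{f(gh)}=\overline{\chi(g)}\,\overline{\chi(h)}\,\chi(gh)=\omega(g,h)$, which is exactly \eqref{eq:exact}. Therefore $\omega$ is exact and $U$ is similar to a unitary representation. The argument is a short direct computation rather than a structural one; the only place needing any care is the well-definedness and unimodularity of $\chi$, and both follow at once from the one-dimensionality of $\hik$ together with unitarity of the operators $U(g)$, so I expect no genuine obstacle.
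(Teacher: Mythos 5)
Your proof is correct and follows essentially the same route as the paper: define the unimodular function $\chi$ from the action of $U(g)$ on the invariant line, derive $\chi(gh)=\omega(g,h)\chi(g)\chi(h)$, and conclude that $\omega$ is exact. The only difference is that you explicitly take $f=\overline{\chi}$ to match the form of \eqref{eq:exact}, a small bookkeeping step the paper leaves implicit.
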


\begin{proof}
Let $U$ be a projective unitary representation with an invariant $1$-dimensional subspace.
Hence, there exists a nonzero vector $v\in\hi$ such that $U(g) v \in \C v$ for all $g\in G$.
We can thus define a map $f : G\to \T$ such that $U(g) v = f(g) v$. 
It follows that
\begin{align*}
f(gh) v & = U(gh) v = \omega(g,h) U(g) U(h) v \\
& = \omega(g,h) f (g) f (h) v \, .
\end{align*}
Therefore, $\omega(g,h) = f(gh) \overline{f(g)f(h)}$, hence $\omega$ is an exact multiplier.
\end{proof}

A useful trick when having a projective unitary representation $U$ is to pass from $U$ to a related unitary representation $\widetilde{U}$ that acts in the Hilbert space $\lh$ of operators.
For each $g\in G$ and $L\in\lh$, we define
\begin{equation}
\widetilde{U}(g) L := U(g) L U(g)^\ast \, .
\end{equation}
Notice that even if $U$ is a projective unitary representation, $\widetilde{U}$ is an ordinary unitary representation, since
\begin{align*}
\widetilde{U}(gh) L &= \omega(g,h) \overline{\omega(g,h)} U(g)U(h) L U(h)^\ast U(g)^\ast \\
& = \widetilde{U}(g) \widetilde{U}(h) L \, .
\end{align*}
Moreover, it is easy to see that $U(g)$ commutes with an operator $L\in\lh$ if and only if $\widetilde{U}(g) L =L$.
Since every operator $U(g)$ commutes with the identity operator $\id$, we conclude that the $1$-dimensional subspace $\C\id$ is invariant under $\widetilde{U}$.
In particular, $\widetilde{U}$ is a reducible representation whenever $\dim\hi\geq 2$.

\subsection{Structure of covariant observables}

We will next recall some basic facts about covariant observables.
More details and further references can be found e.g. in \cite{SSQT01}.

Let $H$ be a (proper) subgroup of the group $G$. In the following, we will choose $\Omega\equiv G/H$, i.e., our outcome space will be the quotient space consisting of left cosets $gH=\{gh:h\in H\}$, $g\in G$.
We do not assume that $H$ is a normal subgroup, therefore $G/H$ need not be a group.
However, there is a natural action of $G$ on $G/H$; for each $g'\in G$ and $gH\in \Omega$, we denote by $g'\cdot gH$ the left coset $g'gH$.
This action is transitive, meaning that for any two points $g_1H,g_2H\in\Omega$, there exists $g\in G$ such that $g(g_1H)=g_2H$.

Let $U$ be a projective unitary representation of $G$.
Then $G$ acts, on the one hand, on the outcome space $\Omega$, and on the other hand, on the space $\lh$ through the unitary representation $\widetilde{U}$.
A covariant observable has the property of intertwining these two actions.

\begin{definition}
An observable $\Mo$ based on $\Omega$ is $U$-\emph{covariant} if 
\begin{equation}\label{eq:covariance}
\widetilde{U}(g) \Mo(x) = \Mo(g\cdot x)
\end{equation}
for all $g\in G$, $x\in \Omega$.
\end{definition}

It follows from \eqref{eq:covariance} that
\begin{equation}\label{eq:SM-invariant}
\widetilde{U}(g) \S_\Mo= \S_\Mo
\end{equation}
for all $g \in G$, meaning that $\S_\Mo$ is a $\widetilde{U}$-invariant subspace.
This implies that also $\S_\Mo^\perp$ is a $\widetilde{U}$-invariant subspace.
The unitary representation $\widetilde{U}$ therefore splits into the direct sum 
\begin{equation}
\widetilde{U}= \widetilde{U}^{\S_\Mo} \oplus \widetilde{U}^{\S_\Mo^\perp} \, , 
\end{equation}
where $\widetilde{U}^{\S_\Mo}$ and $\widetilde{U}^{\S_\Mo^\perp}$ are the restrictions of $\widetilde{U}$ to $\S_\Mo$ and $\S_\Mo^\perp$, respectively.

We also see that a $U$-covariant observable $\Mo$ must be of the form
\begin{equation}\label{eq:covariant-povm}
\Mo(gH)=\widetilde{U}(g) M 
\end{equation}
for some positive operator $M\in\lh$.
Namely, we denote $M\equiv \Mo(eH)$ for the identity element $e\in G$ and then \eqref{eq:covariant-povm} follows from \eqref{eq:covariance}.
Notice that if $g',g\in G$ are such that $g'\in gH$, then $\Mo(g'H)=\Mo(gH)$, implying that
\begin{equation}
U(g) M U(g)^\ast = U(g') M U(g')^\ast \, .
\end{equation}
This means that $[M,U(h)]=0$ for all $h\in H$.

If we start from a positive operator $M\in\lh$ satisfying $[M,U(h)]=0$ for all $h\in H$ and define $\Mo$ by formula \eqref{eq:covariant-povm}, then the positivity and the covariance condition \eqref{eq:covariance} holds but we need to check that $\sum_x \Mo(x)=\id$ in order to get an observable. 
This normalization is not automatically satisfied, and the suitable operators $M$ depend on the projective representation $U$.

\begin{example}(\emph{Irreducible projective representation})
Suppose that a projective unitary representation $U$ of $G$ is irreducible. 
Fix a nonzero positive operator $M\in\lh$ and define $\Mo$ on $G$ as $\Mo(g)=U(g)MU(g)^\ast$.
In this example $H=\{e\}$, hence the commutativity condition $[M,U(h)]=0$ for all $h\in H$ puts no constrains on $M$.
For every $g'\in G$, we have
\begin{align*}
U(g') \left( \sum_g \Mo(g) \right) & = \sum_g U(g')U(g) M U(g)^\ast \\
= & \sum_g \overline{\omega(g',g)} U(g'g) M U(g)^\ast U(g')^\ast U(g') \\
 =& \left( \sum_g \Mo(g) \right) U(g') \, .
\end{align*}
Therefore, $\sum_g \Mo(g)=c \id$ for some $c\in\real$, and we see that 
\begin{equation*}
c=\tr{\sum_g \Mo(g)}/\tr{\id}=\# G \cdot \tr{M} / d \, .
\end{equation*}
Redefining $M\to \frac{1}{c} M$ we thus obtain a covariant observable.
\end{example}

\subsection{Existence of covariant observables}

Let $U$ be a projective unitary representation of $G$.
We now pose our earlier question in a modified form:
\begin{quote}
If a linear subspace $\S\subseteq\lh$ is given, is it possible to find a \emph{$U$-covariant} observable $\Mo$ such that $\S=\S_\Mo$?
\end{quote}
Obviously, $\S$ must satisfy the basic criterion \eqref{eq:operator-system} and the $\widetilde{U}$-invariance requirement \eqref{eq:SM-invariant}, i.e., 
\begin{equation}
\id\in \S \, , \qquad \S=\S^\ast \, , \qquad \widetilde{U}(g) \S= \S \quad  \forall g\in G \, .
\end{equation}
But it will turn out that these conditions are not sufficient.

As we have seen earlier, for every $U$-covariant observable $\Mo$, there is a positive operator $M\in\lh$ such that \eqref{eq:covariant-povm} holds. 
It follows that
\begin{equation}
\S_\Mo= \spanno{ \widetilde{U}(g) M : g \in G } \, .
\end{equation}
Therefore, the answer to the existence question can be affirmative only if there exists a positive operator $M\in \S$ such that
\begin{equation}
\S=\spanno{ \widetilde{U}(g) M : g \in G }  \, .
\end{equation}
In mathematical terms, this condition  means that $M$ is a \emph{cyclic vector} for the subrepresentation $\widetilde{U}^\S$, the restriction of $\widetilde{U}$ on the invariant subspace $\S$. Before we present a result that gives an important necessary condition to the existence question, let us recall the concepts of Schmidt rank and cyclic representation.

Let $\hi_1,\hi_2$ two Hilbert spaces.
A vector $v\in\hi_1\otimes\hi_2$ can be written in the so-called \emph{Schmidt form} 
\begin{equation}
v=\sum_j \sqrt{\lambda_j} v^{(1)}_j\otimes v^{(2)}_j \, , 
\end{equation}
where $\{v^{(1)}_j\}\subset\hi_1$ and $\{v^{(2)}_j\}\subset\hi_2$ are orthogonal sets.
The number of nonzero coefficients $\lambda_j$ is called the \emph{Schmidt rank} of $v$ and we denote it by $\rank{v}$.
Let us notice that the Schmidt rank of $v$ is the same as the rank of $v$ regarded as a linear operator from $\hi_1^\ast$ to $\hi_2$.

Let $V$ be a representation of $G$ acting on the Hilbert space $\hi$. 
A vector $v\in\hi$ is \emph{cyclic} for $V$ if the linear span of the set $\{ V(g)v : g\in G\}$ is $\hi$.
If there exists a cyclic vector, then we say that $V$ is a \emph{cyclic representation}.

The following result is a particular case of \cite[Theorem 1.10]{GM71}, the only difference being that we provide an explicit way to construct cyclic vectors.

\begin{proposition}\label{prop:cyclic}
Let $\widehat{G}$ be the (finite) set of irreducible unitary representations of $G$, each $\pi\in\widehat{G}$ acting in the Hilbert space $\kk_\pi$. 
Then we have the following facts.

(1) Let $\pi\in\widehat{G}$, and suppose $v\in\kk_\pi\otimes\vv_\pi$. 
Then the vector $v$ is cyclic for the representation $\pi\otimes\id_{\vv_\pi}$ if and only if $\rank{v} = \dim \vv_\pi$.

(2) Suppose $V$ is a unitary representation of $G$ in $\hh$, and let
$$
\hh = \oplus_{\pi\in\widehat{G}} \, \kk_\pi\otimes\vv_\pi \qquad V = \oplus_{\pi\in\widehat{G}} \, \pi\otimes \id_{\vv_\pi}
$$
be the isotypic decomposition of $V$. 
Here, $\dim\vv_\pi$ is the multiplicity of each irreducible unitary representation $\pi$ in $V$. 
For each $\pi\in\widehat{G}$, let $P_\pi : \hh \to \kk_\pi\otimes\vv_\pi$ be the projection of $\hh$ onto $\kk_\pi\otimes\vv_\pi$. Then, a vector $v\in\hh$ is cyclic for $V$ if and only if $P_\pi v$ is cyclic for the representation $\pi\otimes \id_{\vv_\pi}$ for all $\pi\in\widehat{G}$.

(3) With the notations of item (2), $V$ is a cyclic representation if and only if $\dim\vv_\pi \leq \dim\kk_\pi$ for all $\pi\in\widehat{G}$.
\end{proposition}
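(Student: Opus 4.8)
The plan is to establish the three items in order, each building on its predecessor. The single external ingredient is Burnside's theorem: since $\pi$ is irreducible over $\C$, the linear span of $\{\pi(g):g\in G\}$ is the entire operator algebra $\elle{\kk_\pi}$.

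For item (1), I would first invoke Burnside to replace group elements by arbitrary operators: the cyclic subspace generated by $v$, namely $\spanno{(\pi(g)\otimes\id_{\vv_\pi})v : g\in G}$, coincides with $\spanno{(A\otimes\id_{\vv_\pi})v : A\in\elle{\kk_\pi}}$. I would then substitute the Schmidt form $v=\sum_j\sqrt{\lambda_j}\,v_j^{(1)}\otimes v_j^{(2)}$, so that $(A\otimes\id_{\vv_\pi})v=\sum_j\sqrt{\lambda_j}\,(Av_j^{(1)})\otimes v_j^{(2)}$. Choosing the rank-one operators $A=\kb{w}{v_k^{(1)}}$ with arbitrary $w\in\kk_\pi$ and each index $k$ for which $\lambda_k\neq 0$ produces all vectors $w\otimes v_k^{(2)}$; hence the cyclic subspace is exactly $\kk_\pi\otimes\vv'$, where $\vv'\subseteq\vv_\pi$ is spanned by the $v_k^{(2)}$ with $\lambda_k\neq 0$ and so has dimension $\rank{v}$. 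This fills $\kk_\pi\otimes\vv_\pi$ precisely when $\rank{v}=\dim\vv_\pi$, which proves (1) and at the same time yields the explicit recipe: any vector of full Schmidt rank $\dim\vv_\pi$ is cyclic.

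For item (2), I would extend $V$ linearly to the group algebra $\C[G]$, so that the cyclic subspace of $v$ becomes $V(\C[G])v$. Since each isotypic summand $\kk_\pi\otimes\vv_\pi$ is orthogonal and $V$-invariant, $V(\C[G])v$ splits as $\oplus_\pi V(\C[G])P_\pi v$, and $V(\C[G])P_\pi v$ is exactly the cyclic subspace of $P_\pi v$ for $\pi\otimes\id_{\vv_\pi}$; the total therefore fills $\hh$ if and only if each summand fills its own component, which is (2). Item (3) then follows immediately: a cyclic vector exists iff one can pick, independently in each component, a vector $P_\pi v$ of Schmidt rank $\dim\vv_\pi$, and since the largest Schmidt rank available in $\kk_\pi\otimes\vv_\pi$ is $\min\{\dim\kk_\pi,\dim\vv_\pi\}$, this is possible exactly when $\dim\vv_\pi\leq\dim\kk_\pi$ for every $\pi\in\widehat{G}$.

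The main technical content sits in item (1), where one must rule out the possibility that the cyclic span is a \emph{skew} subspace of $\kk_\pi\otimes\vv_\pi$ rather than a genuine tensor product $\kk_\pi\otimes\vv'$; Burnside's theorem is precisely what forces this, and irreducibility of $\pi$ is essential, since for a reducible $\pi$ the span of $\{\pi(g)\}$ would be only a proper subalgebra and the reduction would fail. Once (1) is secured, item (2) reduces to bookkeeping along the orthogonal $V$-invariant isotypic components (the central idempotents of $\C[G]$ guaranteeing this canonical splitting), and (3) is then a one-line consequence.
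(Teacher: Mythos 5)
Your proposal is correct and follows essentially the same route as the paper: Burnside's theorem (Schur plus the double commutant) for item (1), and the fact that each isotypic projection $P_\pi$ lies in the linear span of the operators $V(g)$ (the central idempotent, i.e.\ the character projection formula) for item (2). The only cosmetic differences are that you work in the Schmidt basis rather than an arbitrary basis of $\kk_\pi$ in (1), phrase (2) as a direct subspace decomposition instead of the paper's orthogonality argument, and spell out the converse direction of (3), which the paper leaves implicit.
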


\begin{proof}
(1) For a fixed a linear basis $\{k_1,k_2,\ldots,k_d\}$ of $\kk_\pi$, with dual basis $\{k^\ast_1,k^\ast_2,\ldots,k^\ast_d\}$, there exist vectors $\{v_1,v_2,\ldots,v_d\}$ in $\vv_\pi$ such that
$$
v=\sum_i k_i\otimes v_i \, .
$$
Clearly, the dimension of the linear space $\vv_\pi^0 = \spanno{v_1,v_2,\ldots,v_d}$ is the rank of $v$. 

By irreducibility of the representation $\pi$, the algebra $\aa = \spanno{\pi(g) : g\in G}$ coincides with the whole $\elle{\kk_\pi}$. 
Indeed, its commutant $\aa'=\C \id_{\kk_\pi}$ by Schur lemma, hence $\aa = \aa'' = \elle{\kk_\pi}$.
For all $k\in\kk_\pi$, the operator $A=kk^\ast_i$ thus belongs to $\aa$, and $(A\otimes\id_{\vv_\pi}) v = k\otimes v_i$. It follows that
$$
\spanno{(\pi(g)\otimes\id_{\vv_\pi}) v : g\in G} = (\aa\otimes\id_{\vv_\pi}) v \supseteq \kk_\pi\otimes\vv^0_\pi \, .
$$
On the other hand, the reverse inclusion is trivial, hence the equality
$$
\spanno{(\pi(g)\otimes\id_{\vv_\pi}) v : g\in G} = \kk_\pi\otimes\vv_\pi
$$
holds if and only if $\vv^0_\pi = \vv_\pi$, i.e.,~$\rank{v} = \dim\vv_\pi$.

(2) Each map $P_\pi : \hh \to \kk_\pi\otimes\vv_\pi$ satisfies $P_\pi V(g) = (\pi(g)\otimes\id_{\vv_\pi}) P_\pi$ for all $g$. If $v\in\hh$ is cyclic for $V$, then necessarily $P_\pi v$ is cyclic for $\pi\otimes\id_{\vv_\pi}$, as the condition
\begin{align*}
0 & = \ip{(\pi(g)\otimes\id_{\vv_\pi}) P_\pi v}{P_\pi w} \\
& = \ip{V(g) v}{P_\pi w} \qquad \forall g\in G
\end{align*}
implies $P_\pi w = 0$. Conversely, suppose that $P_\pi v$ is cyclic for $\pi\otimes\id_{\vv_\pi}$ for all $\pi\in\widehat{G}$, and let $w\in\hh$ be such that $\ip{V(g) v}{w} = 0$ for all $g\in G$. By \cite[Th\'eor\`eme 8]{Serre},
$$
P_\pi = \frac{\dim\kk_\pi}{\#G} \sum_g \tr{\pi(g)} V(g) \, ,
$$
which implies
\begin{align*}
& \ip{(\pi(g)\otimes\id_{\vv_\pi}) P_\pi v}{P_\pi w} =
\ip{V(g) v}{P_\pi w} \\
& \qquad \qquad \qquad = \frac{\dim\kk_\pi}{\#G} \sum_h \tr{\pi(h)} \ip{V(g) v}{V(h) w} \\
& \qquad \qquad \qquad = \frac{\dim\kk_\pi}{\#G} \sum_h \tr{\pi(h)} \ip{V(h^{-1} g) v}{w} \\
& \qquad \qquad \qquad = 0 \, .
\end{align*}
By cyclicity of $P_\pi v$ in $\kk_\pi\otimes\vv_\pi$, then it follows $P_\pi w = 0$. Since this holds for all $\pi\in\widehat{G}$, we have $w=0$. Hence, $v$ is cyclic for $V$.

(3) If $v\in\hh$ is a cyclic vector for $V$, then $P_\pi v$ must be cyclic for $\pi\otimes\id_{\vv_\pi}$ for all $\pi\in\widehat{G}$ by item (2). By item (1), this implies $\rank{P_\pi v} = \dim\vv_\pi$, which can happen only if $\dim\vv_\pi\leq\dim\kk_\pi$.
\end{proof}

Let $\mathcal{F}(G)$ be the vector space of all complex valued functions on $G$, with the inner product $\ip{f_1}{f_2} = \sum_g \overline{f_1(g)} f_2(g)$.
We recall that the \emph{regular representation} $R$ of $G$ acts on $\mathcal{F}(G)$ and is defined as 
\begin{equation}
[R(g)\phi](g')=\phi(g^{-1}g') \, .
\end{equation}
The regular representation is reducible and each irreducible unitary representation $\pi\in\widehat{G}$ occurs in $\mathcal{F}(G)$ with a multiplicity equal to its dimension.
Therefore, from Proposition \ref{prop:cyclic} we conclude the following consequence.

\begin{corollary}\label{prop:regular}
Let $V$ be a unitary representation of $G$.
Then $V$ is a cyclic representation if and only if it is equivalent to a subrepresentation of the regular representation $R$ of $G$.
\end{corollary}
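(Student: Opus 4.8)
The plan is to reduce the statement to a comparison of multiplicities, using part (3) of Proposition \ref{prop:cyclic} together with the decomposition of the regular representation recalled just above. First I would invoke the basic structural fact that, for a finite group, a unitary representation is determined up to equivalence by the multiplicities $m_\pi := \dim\vv_\pi$ with which the various $\pi\in\widehat{G}$ occur in its isotypic decomposition. Thus two representations are equivalent exactly when they share the same multiplicity function $\pi\mapsto m_\pi$, and the whole corollary can be phrased as an equality between two conditions on this function.

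Next I would translate the right-hand side of the corollary into multiplicity language. Since each $\pi\in\widehat{G}$ occurs in $R$ with multiplicity exactly $\dim\kk_\pi$, the $\pi$-isotypic component of $R$ has the form $\kk_\pi\otimes\vv_\pi^R$ with $\dim\vv_\pi^R=\dim\kk_\pi$. A subrepresentation of $R$ is the restriction of $R$ to an $R$-invariant subspace, and every such subspace decomposes as a direct sum of its parts inside the isotypic components; within the $\pi$-component, where $R$ acts as $\pi\otimes\id$ with $\pi$ irreducible, each invariant subspace is of the form $\kk_\pi\otimes W_\pi$ for some $W_\pi\subseteq\vv_\pi^R$. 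Hence the multiplicity functions realizable by subrepresentations of $R$ are precisely those satisfying $m_\pi\leq\dim\kk_\pi$ for all $\pi$, and no others. Therefore ``$V$ is equivalent to a subrepresentation of $R$'' is equivalent to the family of inequalities $\dim\vv_\pi\leq\dim\kk_\pi$ for every $\pi\in\widehat{G}$.

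Finally I would apply Proposition \ref{prop:cyclic}(3), which asserts that $V$ is cyclic if and only if $\dim\vv_\pi\leq\dim\kk_\pi$ for all $\pi\in\widehat{G}$. As this is literally the same family of inequalities produced in the previous step, the two properties coincide, and the corollary follows at once.

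I expect the only point requiring genuine care to be the second step: the claim that every $R$-invariant subspace splits along the isotypic components into blocks of the form $\kk_\pi\otimes W_\pi$, so that the realizable multiplicities are exactly those bounded by $\dim\kk_\pi$. This rests on the fact that the isotypic projections $P_\pi$ commute with $R(g)$ (being linear combinations of the $R(g)$, as in the formula recalled before the statement), hence preserve any invariant subspace, together with Schur's lemma, which forces the invariant subspaces within a single isotypic block to be of the stated tensor form. Once this is granted, the remainder is a direct identification of the cyclicity condition with the subrepresentation condition.
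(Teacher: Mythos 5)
Your proof is correct and follows essentially the same route as the paper, which likewise deduces the corollary from Proposition \ref{prop:cyclic}(3) together with the fact that each $\pi\in\widehat{G}$ occurs in the regular representation with multiplicity $\dim\kk_\pi$. You simply spell out in more detail the (correct) identification of subrepresentations of $R$ with multiplicity functions bounded by $\pi\mapsto\dim\kk_\pi$, which the paper leaves implicit.
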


\begin{example}\label{ex:1dim}(\emph{Representation consisting of $1$-dimensional irreps.})
Suppose that $V$ is a direct sum of $1$-dimensional representations of $G$.
Then $V$ is cyclic if and only if every $1$-dimensional representation $\chi\in\widehat{G}$ is contained in $V$ at most once.
Indeed, the regular representation contains each $1$-dimensional representation exactly once.
The claim thus follows from Corollary \ref{prop:regular}.
\end{example}

\section{Existence of covariant PIC observables}\label{sec:existence}

It has now become clear that the existence of a covariant observable that is pure-state informationally complete depends crucially on the group $G$ and its projective unitary representation  $U$.
In this section we derive some conditions that preclude the existence of a $U$-covariant PIC observable.

\begin{proposition}\label{prop:abelian1}
Let $U$ be a projective unitary representation of the group $G$ on $\hh$. 
Suppose there exist two linearly independent vectors $v_1 , v_2 \in \hh$ and functions $f_1 , f_2 : G\frecc \T$ such that $U(g) v_i = f_i (g) v_i$ for all $g\in G$ and $i\in\{1,2\}$. Then there exists no $U$-covariant PIC observable.
\end{proposition}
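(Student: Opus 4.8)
The plan is to apply Proposition~\ref{prop:pic}: to prove that no $U$-covariant observable is PIC, I will show that for \emph{every} $U$-covariant observable $\Mo$ the orthogonal complement $\S_\Mo^\perp$ contains a nonzero selfadjoint operator of rank at most $2$. Recall from \eqref{eq:covariant-povm} that any such $\Mo$ has the form $\Mo(gH)=\widetilde{U}(g)M$ for a positive operator $M$, so that $\S_\Mo=\spanno{\widetilde{U}(g)M:g\in G}$ and every $S\in\S_\Mo$ is a finite combination $S=\sum_g c_g\,\widetilde{U}(g)M$.

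The key observation is that the two simultaneous eigenvectors turn the diagonal matrix elements of $M$ into $\widetilde{U}$-invariants. Indeed, since $U(g)^\ast v_i=\overline{f_i(g)}v_i$ and $|f_i(g)|=1$, one computes $\ipp{v_i}{\widetilde{U}(g)M}{v_i}=|f_i(g)|^2\,\ipp{v_i}{M}{v_i}=\ipp{v_i}{M}{v_i}$, independently of $g$. Writing $m_i:=\ipp{v_i}{M}{v_i}$ (a nonnegative real number, as $M\ge 0$), it follows that $\ipp{v_i}{S}{v_i}=\left(\sum_g c_g\right) m_i$ for every $S\in\S_\Mo$; in particular the pair $(\ipp{v_1}{S}{v_1},\ipp{v_2}{S}{v_2})$ is always a scalar multiple of the fixed pair $(m_1,m_2)$, hence confined to a one-dimensional space.

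I would then exhibit the witness $T:=m_2\,\kb{v_1}{v_1}-m_1\,\kb{v_2}{v_2}$. It is selfadjoint and, being supported on $\spanno{v_1,v_2}$, has rank at most $2$. Using the identity $\tr{\kb{v_i}{v_i}S}=\ipp{v_i}{S}{v_i}$, the previous step gives $\tr{TS}=m_2\,\ipp{v_1}{S}{v_1}-m_1\,\ipp{v_2}{S}{v_2}=\left(\sum_g c_g\right)(m_2 m_1-m_1 m_2)=0$ for all $S\in\S_\Mo$, so $T\in\S_\Mo^\perp$. By Proposition~\ref{prop:pic} this already finishes the proof whenever $T\neq 0$.

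The only point requiring care, and the step I would single out as the main obstacle, is the degenerate case $T=0$. Since $v_1,v_2$ are linearly independent, $\kb{v_1}{v_1}$ and $\kb{v_2}{v_2}$ are linearly independent in $\lh$, so $T=0$ forces $m_1=m_2=0$, i.e.\ $M$ annihilates both eigenvectors. In that case I replace $T$ by $T':=\kb{v_1}{v_1}$, a nonzero rank-one selfadjoint operator; the same computation gives $\tr{T'S}=\ipp{v_1}{S}{v_1}=\left(\sum_g c_g\right)m_1=0$, so $T'\in\S_\Mo^\perp$. Either way $\S_\Mo^\perp$ contains a nonzero selfadjoint operator of rank at most $2$, and Proposition~\ref{prop:pic} shows that $\Mo$ is not PIC. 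As $\Mo$ was an arbitrary $U$-covariant observable, no $U$-covariant PIC observable can exist. It is worth noting that neither the multiplier $\omega$ nor the question whether the eigen-character $f_1\overline{f_2}$ is trivial plays any role: only the unimodularity $|f_i(g)|=1$ is used.
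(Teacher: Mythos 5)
Your proof is correct. The paper's own argument is the ``primal'' version of yours: it takes the two distinct pure states $\varrho_i=\kb{v_i}{v_i}/\no{v_i}^2$, uses exactly the same invariance computation $\ipp{v_i}{\widetilde{U}(g)M}{v_i}=|f_i(g)|^2\ipp{v_i}{M}{v_i}=\ipp{v_i}{M}{v_i}$ to conclude that each map $x\mapsto\tr{\Mo(x)\varrho_i}$ is constant on $\Omega=G/H$, and then invokes the normalization $\sum_x\tr{\Mo(x)\varrho_i}=1$ to pin both constants down to $1/\#\Omega$; the two states then give identical outcome distributions, so $\Mo$ fails to be PIC directly from the definition. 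You instead dualize through Proposition~\ref{prop:pic}, exhibiting a nonzero selfadjoint operator of rank at most $2$ in $\S_\Mo^\perp$. The content is the same (the paper's implicit witness is $\varrho_1-\varrho_2$, a traceless selfadjoint operator of rank $2$ orthogonal to every $\Mo(x)$), but the paper's route avoids your case distinction: the normalization forces $\tr{M\varrho_i}=1/\#\Omega\neq 0$, so the degenerate case $m_1=m_2=0$ that you guard against cannot occur for an actual observable. Your handling of that case is nonetheless valid, and your version has the mild advantage of never needing the normalization $\sum_x\Mo(x)=\id$, only positivity of $M$ and the covariant form $\Mo(gH)=\widetilde{U}(g)M$.
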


\begin{proof}
We define two pure states $\varrho_i = \kb{v_i}{v_i} / \no{v_i}^2$, $i=1,2$ and we will show that no $U$-covariant observable can separate these states. Notice that $\varrho_1 \neq \varrho_2$ by the linear independence of $\{v_1,v_2\}$. 

Suppose $\Mo$ is a $U$-covariant observable based on a quotient space $\Omega=G/H$ and let $i\in\{1,2\}$. 
Then, for every $x=gH\in\Omega$,
\begin{align*}
\tr{\Mo(gH)\varrho_i} & = \tr{U(g)\Mo(eH)U(g)^\ast \varrho_i} \\
& = f_i (g) \overline{f_i (g)} \tr{M\varrho_i} = \tr{M\varrho_i} \, ,
\end{align*}
hence the map $x\mapsto\tr{\Mo(x)\varrho_i}$ is constant. 
Since 
\begin{equation}
\sum_x \tr{\Mo(x)\varrho_i} = \tr{\id \varrho_i} = 1 \, , 
\end{equation}
we must have
\begin{equation}
\tr{\Mo(x)\varrho_i} = \frac{1}{\# \Omega} \qquad \forall x\in \Omega \, .
\end{equation}
In particular, $\tr{\Mo(x)\varrho_1}=\tr{\Mo(x)\varrho_2}$ for all $x\in \Omega$.
Therefore, $\Mo$ is not PIC. 
\end{proof}

\begin{proposition}\label{prop:abelian2}
Let $G$ be an abelian group and $\dim\hh\geq 2$. 
Then, in the following two cases there exists no $U$-covariant PIC observable:
\begin{itemize}
\item[(a)] if $U$ is a unitary representation;
\item[(b)] if $G$ is cyclic and $U$ is a projective unitary representation.
\end{itemize}
\end{proposition}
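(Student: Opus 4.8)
The plan is to reduce both statements to Proposition \ref{prop:abelian1}: in each case I would exhibit two linearly independent vectors $v_1,v_2\in\hh$ together with functions $f_1,f_2:G\frecc\T$ satisfying $U(g)v_i=f_i(g)v_i$ for all $g$, whereupon the nonexistence of a $U$-covariant PIC observable follows at once. Thus the entire argument amounts to producing two linearly independent \emph{joint eigenvectors} for the commuting family $\{U(g)\}_{g\in G}$.

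For part (a), the key fact is that a unitary representation of an abelian group is assembled entirely out of $1$-dimensional irreducibles. Concretely, since $G$ is abelian the operators $U(g)$ form a mutually commuting family of unitaries, hence are simultaneously diagonalizable; equivalently, $\hh$ admits an orthonormal basis of common eigenvectors, each $v$ obeying $U(g)v=f(g)v$ for some character $f:G\frecc\T$. Because $\dim\hh\geq 2$, at least two such basis vectors are available, and any two of them are linearly independent. Choosing such a pair and applying Proposition \ref{prop:abelian1} settles (a).

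For part (b), the extra ingredient is Example \ref{ex:cyclic}: every projective unitary representation of a cyclic group is similar to a unitary representation. Accordingly I would pick $f:G\frecc\T$ so that $U'(g):=f(g)U(g)$ is a genuine unitary representation of the (cyclic, hence abelian) group $G$. Part (a) then supplies two linearly independent vectors $v_1,v_2$ and characters $f'_1,f'_2:G\frecc\T$ with $U'(g)v_i=f'_i(g)v_i$. Transferring back, $U(g)v_i=\overline{f(g)}\,U'(g)v_i=\overline{f(g)}f'_i(g)v_i$, so $v_1,v_2$ are joint eigenvectors of $U$ with phases $f_i(g):=\overline{f(g)}f'_i(g)\in\T$. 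A final appeal to Proposition \ref{prop:abelian1} completes (b).

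I do not anticipate any serious obstacle; the only subtlety worth flagging is \emph{why cyclicity}, and not abelianness alone, is required in (b). The reduction in (b) rests entirely on the unitarizability guaranteed by Example \ref{ex:cyclic}, which can fail for a general abelian $G$ whose multiplier is not exact. In that situation $U$ may well be irreducible with no invariant $1$-dimensional subspace (cf.\ Proposition \ref{prop:exact}), so no pair of joint eigenvectors exists and this line of argument does not apply -- consistent with the fact that the statement is only asserted for cyclic $G$.
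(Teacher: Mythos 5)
Your proposal is correct and follows essentially the same route as the paper: decompose the unitary representation of the abelian group into $1$-dimensional subrepresentations to obtain two linearly independent joint eigenvectors and invoke Proposition \ref{prop:abelian1}, then reduce the cyclic projective case to the unitary case via Example \ref{ex:cyclic}. Your explicit transfer of the eigenvectors back from $U'$ to $U$ in part (b) is a slightly more careful rendering of the paper's terse ``the claim follows from (a)'', but it is not a different argument.
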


\begin{proof}
(a) If $U$ is a unitary representation, then the Hilbert space $\hh$ decomposes into the orthogonal sum of $1$-dimensional subrepresentations, each one carrying the action of a $1$-dimensional unitary representation of $G$. Then the claim follows by Proposition \ref{prop:abelian1}. 
\newline\indent
(b) If $G$ is cyclic and $U$ is a projective unitary representation, then $U$ is similar to a unitary representation (see Example \ref{ex:cyclic}), and the claim follows from (a).
\end{proof}

\section{Existence of covariant and minimal PIC observables}\label{sec:minimal}

In this section we combine the earlier concepts and methods and search for a minimal PIC observable that is covariant under some projective unitary representation of some finite group.
As it turns out, the existence of such an observables depends on the Hilbert space dimension $d$.
In particular, we will investigate the existence question for the dimensions $d=3$ and $d=7$. 
These two instances demonstrate that a desired observable may exist or not.

\subsection{Dimension 3}\label{sec:dim3}

We first investigate the case when the dimension of $\hi$ is $3$. 
We want to find an observable $\Mo$ such that $\Mo$ is PIC and has minimal number of outcomes, and further that $\Mo$ is covariant with respect to some group $G$.
We will restrict our search for observables based on $\Omega\equiv G$ since already in this situation we can find two possible symmetry groups.

The minimal number of outcomes for a PIC in dimension $3$ is $8$ \cite{HeMaWo11}.
There are five groups with $8$ elements: three abelian groups $\Z_8$, $\Z_2\times \Z_4$, $\Z_2\times \Z_2 \times \Z_2$ and two non-abelian groups $D$ (dihedral group) and $Q$ (quaternionic units).
We will show that the abelian groups are unsuitable while the non-abelian groups can be used to construct a desired observable.

Let $G$ be a group with $8$ elements, $U$ a projective unitary representation of $G$ and $\Mo$ a $U$-covariant PIC observable on $G$. 
In the following we will proceed in steps to reveal the limitations in the choice of $G$ and $U$.

\subsection*{$\bullet$ Projective representation $U$ must be reducible}

For a minimal PIC observable $\Mo$ we have $\dim \S_\Mo=8$, hence the orthogonal space $\S_\Mo^\perp$ is generated by a single selfadjoint operator $T$ which satisfies $\tr{T}=0$ and by Proposition \ref{prop:pic} must have rank $3$, i.e., is invertible.
Thus, from \eqref{eq:SM-invariant} it follows that
\begin{equation}
U(g) T U(g)^\ast = c_g T \qquad \forall g \in G
\end{equation}
for some numbers $c_g\in\R$.
This implies that 
\begin{equation*}
\det (T)=\det(U(g) T U(g^\ast))=\det (c_g T) = c_g^3 \det(T) \, .
\end{equation*}
Since $\det (T)\neq 0$, we conclude that $c_g=1$ for every $g\in G$.
Therefore,
\begin{equation}
U(g) T=TU(g) \qquad \forall g \in G \, .
\end{equation}
But since $\tr{T}=0$, we see that $T$ cannot be a scalar multiple of the identity operator.
Therefore, the projective representation $U$ is reducible.

\subsection*{$\bullet$ Projective representation $U$ is similar to an ordinary unitary representation}

We can infer more about $U$ by using the spectral decomposition for $T$.
By changing $T$ to $-T$ if necessary (this does not change $\S_\Mo^\perp$), we can write $T$ as
\begin{equation}
T = \lambda_1 P_1 - \lambda_2 P_2 - \lambda_3 P_3 \, , 
\end{equation}
where $\lambda_j >0$, $\lambda_1=\lambda_2+\lambda_3$ and $P_1,P_3,P_3$ are orthogonal $1$-dimensional projections.
Depending on the eigenvalues of $T$, we have two alternative situations:
\begin{itemize}
\item[(a)] If $\lambda_2\neq \lambda_3$, then $U$ commutes with each of the three projections $P_1$, $P_2$ and $P_3$.
\item[(b)] If $\lambda_2 = \lambda_3$, then $U$ commutes with $P_1$ and $P_2+P_3$ (but not necessarily with $P_2$ and $P_3$ separately).
\end{itemize}
In both cases $U$ leaves invariant the $1$-dimensional subspace $P_1 \hh$.
By Proposition \ref{prop:exact} we conclude that $U$ is similar to an ordinary unitary representation.

\subsection*{$\bullet$ Symmetry group must be non-abelian}

We have seen that $U$ must be an ordinary unitary representation of $G$.
It follows from Proposition \ref{prop:abelian2} that the symmetry group $G$ must be non-abelian, hence either the dihedral group $D$ or the group of unit quaternions $Q$.

The two situations that were separated according to the eigenvalues of $T$ lead to different conclusions:
\begin{itemize}
\item[(a)]
If $\lambda_2\neq \lambda_3$, then $U$ leaves invariant all the $1$-dimensional spaces $P_1\hh$, $P_2\hh$ and $P_3\hh$. By Proposition \ref{prop:abelian1}, there exists no $U$-covariant PIC observable.

\item[(b)]
If $\lambda_2=\lambda_3$, then $U$ leaves invariant the $1$-dimensional space $P_1\hh$ and the $2$-dimensional space $(P_2+P_3)\hh$. If the space $(P_2+P_3)\hh$ is irreducible, then Proposition \ref{prop:abelian1} does not exclude the existence of $U$-covariant PIC observables. To determine whether $U$-covariant PIC observables exist or not, one further needs to establish if the representation $\widetilde{U}$ is cyclic.
\end{itemize}

\subsection*{$\bullet$ Quaternionic and dihedral symmetry groups}

The quaternionic group $Q$ consists of $8$ elements $\pm 1, \pm i , \pm j , \pm k$ satisfying the relations
\begin{equation}
\begin{split}
(-1)^2 & = 1 \qquad (\pm 1) g = g(\pm 1)= \pm g \quad \forall g\in Q \\
i^2 & = j^2 = k^2 = -1 \qquad ij = -ji = k \, .
\end{split}
\end{equation}
For our purposes, it is convenient to use a matrix realization of $Q$.
We denote by $M_2 (\C)$ the Hilbert space of complex $2\times 2$ matrices, equipped with the Hilbert-Schmidt inner product.
The identity matrix $\id$ together with the Pauli matrices $\sigma_1$, $\sigma_2$, $\sigma_3$ form an orthogonal basis of $M_2(\C)$.
The quaternionic group $Q$ can be described as the collection of matrices $\{\pm\id \, ,\, \pm\sigma_1 \, ,\, \pm\sigma_2 \, ,\, \pm\sigma_3\}$, endowed with the usual matrix product rule, according to the correspondence
\begin{align*}
\pm 1 \leftrightarrow \pm \id \, , \quad \pm i \leftrightarrow \pm i \sigma_1 \, , \quad \pm j \leftrightarrow \mp i \sigma_2 \, , \quad  \pm k \leftrightarrow \pm i \sigma_3 \, .
\end{align*}
 
The dual $\widehat{Q}$ consists of four $1$-dimensional unitary representations and a single $2$-dimensional unitary representation. 
The $2$-dimensional unitary representation is the identity map $\pi (g) = g$, and the $1$-dimensional representations are given in Table \ref{tab:q}.

\begin{table}
\caption{The $1$-dimensional irreducible representations of the quaternionic group $Q$ and the dihedral group $D$.}
\label{tab:q}
\begin{center}
\begin{tabular}{ | c || l | l |}
\hline
irrep & value in the case of $Q$ & value in the case of $D$  \\
\hline\hline
$\chi_0(g)$  & 1 \quad $\forall g$ & 1 \quad $\forall g$ \\ 
\hline
$\chi_1(g)$ & 1 if $g\in\{\pm\id \, , \, \pm i\sigma_1\}$ & 1 if $g\in\{\pm\id \, , \, \pm i\sigma_1\}$  \\
 & -1 otherwise  &  -1 otherwise \\ 
\hline
\hline
$\chi_2(g)$ & 1 if $g\in\{\pm\id \, , \, \pm i\sigma_2\}$ & 1 if $g\in\{\pm\id \, , \, \pm \sigma_2\}$  \\
 & -1 otherwise  &  -1 otherwise \\ 
\hline
\hline
$\chi_3(g)$ & 1 if $g\in\{\pm\id \, , \, \pm i\sigma_3\}$ & 1 if $g\in\{\pm\id \, , \, \pm \sigma_3\}$  \\
 & -1 otherwise  &  -1 otherwise \\ 
\hline
\end{tabular}
\end{center}
\end{table}

The unitary representation
$$
\widetilde{\pi}(g) L := \pi(g) L \pi(g)^\ast \qquad \forall L\in M_2 (\C)
$$
decomposes into the direct sum
$$
M_2 (\C) = \C \id \oplus \C \sigma_1 \oplus \C \sigma_2 \oplus \C \sigma_3 \, , \qquad \widetilde{\pi} = \chi_0 \oplus \chi_1 \oplus \chi_2 \oplus \chi_3 \, .
$$
The contragradient representation $\bar{\pi}$ in the dual space $\C^{2\,\ast}$ of row vectors is defined as
\begin{equation}
\bar{\pi} (g) v
^t : = v
^t \pi(g^{-1}) \qquad \forall v
\in\C^2 \, . 
\end{equation}
The representation $\bar{\pi}$ is equivalent to $\pi$, and an intertwining operator $V:\C^{2\,\ast}\to\C^2$ is given by
\begin{equation}\label{eq:defV}
V v
^t = \sigma_2 v
  \qquad \forall v
\in \C^2 \, .
\end{equation}
In the following it is convenient to use block form for matrices, for instance
\begin{equation*}
\mat{0 & \bar{v
}^t \\ v
 & 0} \equiv \mattt{0 & \overline{v}_1 & \overline{v}_2 \\
 v_1 & 0 & 0 \\
 v_2 & 0 & 0 }
\end{equation*}
where $v_1,v_2$ are the components of the vector $v\in\C^2$.

We now consider the Hilbert space $\hh = \C^3$ and introduce the following unitary representation $U$ in $\hh$:
\begin{equation}\label{eq:U}
U(g) = \mat{1 & 0 \\ 0 & \pi(g)} \, .
\end{equation}
If $\lambda >0$ and $T$ is defined as
\begin{equation}\label{eq:T}
T = \mat{2\lambda & 0 \\ 0 & -\lambda \id_{\C^2} } \, ,
\end{equation}
then clearly $\tr{T}=0$ and $U(g) T U(g)^\ast = T$ for all $g$. 
These choices satisfy the necessary requirements found earlier; $U$ is reducible and $T$ has a degenerate eiegenvalue. 

With these preliminary observations, we are ready for the following result.

\begin{proposition}\label{prop:covPIC3}
Let $M\in\lh$ be the operator
\begin{align}\label{eq:seed}
M &= \frac{1}{8} \id + \alpha_1 \mat{0 & 0 \\ 0 & \sigma_1} + \alpha_2 \mat{0 & 0 \\ 0 & \sigma_2} + \alpha_3 \mat{0 & 0 \\ 0 & \sigma_3} \nonumber \\
&+ \mat{0 & \bar{v
}^t \\ v
 & 0} 
\end{align}
with
\begin{gather*}
\alpha_1 \, , \, \alpha_2 \, , \, \alpha_3 \in \R \, , \qquad  v
\in\C^2 
\end{gather*}
such that
\begin{align}
& \alpha_1 \neq 0 \, , \quad \alpha_2 \neq 0 \, , \quad \alpha_3 \neq 0 \label{cond:1} \\
& v \neq 0 \label{cond:2}
\end{align}
and
\begin{gather}
M\geq 0 \, . \label{cond:3}
\end{gather}

Then the map
\begin{equation}\label{eq:POVM}
\Mo(g) = U(g) M U(g)^\ast \, , \qquad g\in Q
\end{equation}
is a $U$-covariant observable.
Moreover, $\S_\Mo = T^\perp$, hence $\Mo$ is a PIC with minimal number of outcomes.
\end{proposition}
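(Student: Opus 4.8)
The plan is to verify in turn the three defining axioms of an observable (positivity, covariance, normalization), and then to identify $\S_\Mo$ with $T^\perp$ by invoking the cyclicity criterion of Proposition \ref{prop:cyclic}. Positivity of each $\Mo(g)$ is immediate from $M\geq 0$ (condition \eqref{cond:3}) together with unitarity of $U(g)$, while covariance holds by construction because $U$ is an ordinary unitary representation and $\Omega=Q$. The only nontrivial axiom is the normalization $\sum_{g\in Q}\Mo(g)=\id$. I would establish it by splitting $M$ according to the block decomposition of $M_3(\C)$ induced by $\hh=\C\oplus\C^2$: the scalar part $\frac18\id$ is fixed by every $\widetilde{U}(g)$, so summing over the eight group elements reproduces $\id$; the three Pauli terms lie in the $\chi_1,\chi_2,\chi_3$ isotypic directions of the lower-right block, so applying $\sum_g\widetilde{U}(g)$ produces the character sums $\sum_g\chi_i(g)=0$; and the off-diagonal term transforms through $v\mapsto\pi(g)v$, whose sum vanishes since $\{\pm\id,\pm i\sigma_1,\pm i\sigma_2,\pm i\sigma_3\}$ come in $\pm$ pairs. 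Hence every contribution except the scalar one cancels and the normalization follows.

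For the second claim I would first check $M\in T^\perp$, i.e. $\tr{TM}=0$: since $T$ of \eqref{eq:T} is block diagonal and traceless, the off-diagonal part of $M$ contributes nothing, the scalar part gives $\tfrac18\tr{T}=0$, and each Pauli term gives $-\lambda\alpha_i\tr{\sigma_i}=0$. Because $T^\perp$ is $\widetilde{U}$-invariant and $M\in T^\perp$, one gets $\S_\Mo\subseteq T^\perp$, and since $\dim T^\perp=8$ it then suffices to prove that $M$ is a cyclic vector for the restriction $\widetilde{U}^{T^\perp}$. I would read off the isotypic decomposition $\widetilde{U}\cong 2\chi_0\oplus\chi_1\oplus\chi_2\oplus\chi_3\oplus 2\pi$ on $M_3(\C)$ from how the four blocks transform: the scalar corner and $\id_{\C^2}$ furnish the two trivial copies, the traceless part of the lower-right block gives $\chi_1\oplus\chi_2\oplus\chi_3$ as in Table \ref{tab:q}, the lower-left column carries $\pi$, and the upper-right row carries $\bar\pi\cong\pi$. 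Removing $\C T$ from the two-dimensional trivial component leaves exactly $\C\id$, so $\widetilde{U}^{T^\perp}\cong\chi_0\oplus\chi_1\oplus\chi_2\oplus\chi_3\oplus 2\pi$.

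Finally I would apply Proposition \ref{prop:cyclic}. The component of $M$ in $\chi_0$ is $\frac18\id\neq0$, and the components in $\chi_1,\chi_2,\chi_3$ are $\alpha_1,\alpha_2,\alpha_3$ times the respective Pauli directions, nonzero by \eqref{cond:1}; each is therefore cyclic for the corresponding one-dimensional, multiplicity-one irrep. The delicate point, and the main obstacle, is the doubled irrep $\pi$, where I must show that $\rank{(P_\pi M)}=2=\dim\vv_\pi$. Using the intertwiner $V$ of \eqref{eq:defV} to identify the upper-right copy $\bar\pi$ with $\pi$, the vector $P_\pi M$ corresponds to $v\otimes e_1+(\sigma_2\bar{v})\otimes e_2$ in $\C^2\otimes\vv_\pi$, whose Schmidt rank equals the rank of the matrix with columns $v$ and $\sigma_2\bar{v}$; a direct computation gives determinant $i\no{v}^2$, nonzero precisely by \eqref{cond:2}. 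Thus $P_\pi M$ has full Schmidt rank, $M$ is cyclic, and $\S_\Mo=T^\perp$.

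Consequently $\S_\Mo^\perp=\C T$ consists only of real multiples of the invertible, rank-$3$ operator $T$, so Proposition \ref{prop:pic} guarantees that $\Mo$ is PIC; moreover $\dim\S_\Mo=8$ forces the eight operators $\Mo(g)$ to be linearly independent, whence $\Mo$ has exactly $8$ outcomes and attains the minimal number. The only genuinely computational step is the Schmidt-rank calculation for the $\pi$-component, and everything else reduces to the character orthogonality and block-transformation bookkeeping sketched above.
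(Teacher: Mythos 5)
Your proof is correct and follows the same overall strategy as the paper: verify positivity and covariance, establish normalization, and then identify $\S_\Mo$ with $T^\perp$ by checking cyclicity of $M$ component-by-component via Proposition \ref{prop:cyclic}, with the only genuinely delicate point being the Schmidt rank of the $\pi$-isotypic component (your determinant $i\no{v}^2$ matches the paper's equivalent criterion $\sigma_2\bar{v}\notin\C v \Leftrightarrow v\neq 0$). The one place where your route differs is the normalization $\sum_g\Mo(g)=\id$: you compute $\sum_g\widetilde{U}(g)$ directly on each isotypic piece of $M$, killing the Pauli terms by the character sums $\sum_g\chi_i(g)=0$ and the off-diagonal block by the $\pm$ pairing in $Q$, whereas the paper argues that $A=\sum_g\Mo(g)$ commutes with $U$, hence is block-scalar, and then pins down the two constants using $A\in T^\perp$ together with $\tr{A}=8\tr{M}=3$. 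Your version is more computational but self-contained; the paper's is shorter because it recycles the invariance of $T^\perp$. One cosmetic point: $\S_\Mo^\perp=\C T$ is a complex line, not just the set of real multiples of $T$, but since Proposition \ref{prop:pic} only tests selfadjoint elements (which are exactly the real multiples of $T$), your conclusion stands.
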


\begin{proof}
We first show that $\S_\Mo = T^\perp$, i.e.,~the operator $M$ is a cyclic vector  for the restriction of the representation $\widetilde{U}$ to the invariant subspace $T^\perp$ of $\lh$. 
Under the action of $\widetilde{U}$, the space $T^\perp$ decomposes into the direct sum of irreducible invariant subspaces
\begin{eqnarray*}
T^\perp & = & \C \id \oplus \C \mat{0 & 0 \\ 0 & \sigma_1} \oplus \C \mat{0 & 0 \\ 0 & \sigma_2} \oplus \C \mat{0 & 0 \\ 0 & \sigma_3} \\
&& \oplus \mat{0 & \C^{\ast 2} \\ 0 & 0} \oplus \mat{0 & 0 \\ \C^2 & 0} \, ,
\end{eqnarray*}
and according to such splitting the representation $\widetilde{U}$ restricted to $T^\perp$ decomposes as
\begin{eqnarray*}
\widetilde{U} & = & \chi_0 \oplus \chi_1 \oplus \chi_2 \oplus \chi_3 \oplus \bar{\pi} \oplus \pi \, .
\end{eqnarray*}

As $\pi$ and $\bar{\pi}$ are equivalent, the representations $\bar{\pi} \oplus \pi$ and $\pi\otimes\id_{\C^2}$ are equivalent.
A linear map
$$
W : \mat{0 & \C^{\ast 2} \\ 0 & 0} \oplus \mat{0 & 0 \\ \C^2 & 0} \to \C^2 \otimes \C^2
$$
explicitely yelding this equivalence is given by
$$
W \mat{0 & u^t \\ v
 & 0} = v
 \otimes \left(\begin{array}{c} 1 \\ 0 \end{array}\right) + Vu^t \otimes \left(\begin{array}{c} 0 \\ 1 \end{array}\right) \, ,
$$
where $V$ is the map defined in \eqref{eq:defV}.
The condition \eqref{cond:2} is the same of $V\bar{v
}^t \notin \C v
$ \red{(indeed, $\sigma_2 \bar{v
} \notin \C v$ if and only if $v\neq 0$)}, which in turn is equivalent to $\rank{W \mat{0 & \bar{v}^t \\ v & 0}} = 2$. 
By item (1) of Proposition \ref{prop:cyclic}, this implies that the vector $\mat{0 & \bar{v}^t \\ v & 0}$ is cyclic for the subrepresentation $\bar{\pi}\oplus\pi\simeq \pi\otimes\id_{\C^2}$ of $\widetilde{U}$. Moreover, each $1$-dimensional subrepresentation $\chi_i$ of $\widetilde{U}$ is clearly cyclic. 
Condition \eqref{cond:1} and the cyclicity of $\mat{0 & \bar{v}^t \\ v & 0}$ then imply that $M$ is cyclic in $T^\perp$ by item (2) of Proposition \ref{prop:cyclic}, as claimed.

We still need to show that $\Mo$ is an observable.
First, since $\Mo$ is of the form \eqref{eq:POVM} and $M\geq 0$, it follows that every $\Mo(g)$ is positive.
Second, we need to prove that $\Sigma_g \Mo(g)=\id$. 
We denote
\begin{equation}
A := \sum_g \Mo(g) \, .
\end{equation}
Then, $U(g) A = A U(g)$ for all $g$.
It follows that
\begin{equation}
A = \mat{\beta_1 & 0 \\ 0 & \beta_2 \id_{\C^2} } \qquad \beta_1 , \beta_2 \in \C \, .
\end{equation}
Since $A\in T^\perp$, we have $\tr{AT} = 0$, which implies $\beta_1 = \beta_2$. 
On the other hand,
\begin{equation*}
\tr{A} = \sum_g \tr{\Mo(g)} = 8\cdot \tr{M} = 8\cdot \tr{(1/8) \id} = 3 \, , 
\end{equation*}
therefore $\beta_1 = \beta_2=1$.
This means that $A = \id$.
\end{proof}

\begin{remark}
The nonzero eigenvalues of the matrices
\begin{align}
\alpha_1 \mat{0 & 0 \\ 0 & \sigma_1} + \alpha_2 \mat{0 & 0 \\ 0 & \sigma_2} + \alpha_3 \mat{0 & 0 \\ 0 & \sigma_3} 
\end{align}
and
\begin{align}
 \mat{0 & \bar{v}^t \\ v
 & 0} 
\end{align}
are $\pm \sqrt{\alpha_1^2+\alpha_2^2+\alpha_3^2}$ and $\pm \sqrt{|v_1|^2 + |v_2|^2}$, respectively.
Therefore, the positivity condition \eqref{cond:3} is satisfied if we choose $\alpha_1 , \alpha_2 , \alpha_3$ and $v$ such that
\begin{equation*}
\sqrt{\alpha_1^2 + \alpha_2^2 + \alpha_3^2} + \sqrt{|v_1|^2 + |v_2|^2} \leq \frac{1}{8} \, .
\end{equation*}
However, this inequality is not a necessary condition for \eqref{cond:3} to hold, but only a convenient sufficient condition.
\end{remark}

\begin{example}(\emph{Rank-1 PIC observable}) 
Suppose $\gamma\in [0,2\pi)$ and $\alpha_1,\alpha_2,\alpha_3$ are nonzero real numbers satisfying $\alpha_1^2 + \alpha_2^2 + \alpha_3^2 = 1/64$. Then the operator
$$
M = \left( \begin{array}{ccc}
\frac{1}{8} & \frac{\e^{-i\gamma} \sqrt{1+8\alpha_3}}{8} & \frac{\e^{-i\gamma} (\alpha_1 - i\alpha_2)}{\sqrt{1+8\alpha_3}} \\
\frac{\e^{i\gamma} \sqrt{1+8\alpha_3}}{8} & \frac{1}{8} + \alpha_3 & \alpha_1 - i\alpha_2 \\
\frac{\e^{i\gamma} (\alpha_1 + i\alpha_2)}{\sqrt{1+8\alpha_3}} & \alpha_1 + i\alpha_2 & \frac{1}{8} - \alpha_3
\end{array} \right)
$$
is of the form \eqref{eq:seed} and satisfies conditions \eqref{cond:1} and \eqref{cond:2}. Moreover, a direct calculation yelds $M^2 = (3/8) M$, which shows that $M\geq 0$ and $\rank{M} = 1$. It follows from Proposition \ref{prop:covPIC3} that the map $g\mapsto U(g) M U(g)^\ast$ is a $U$-covariant PIC observable.
\end{example}

Except some minor details, the previous construction can be done by using the dihedral group $D$ instead of the quaternionic group $Q$.
We briefly explain the needed modifications.
It is convenient to describe $D$ as a collection of $2\times 2$ complex matrices,
\begin{gather*}
D= \{\id \, ,\, -\id \, ,\, i\sigma_1 \, ,\, -i\sigma_1 \, ,\, \sigma_2 \, ,\, -\sigma_2 \, , \, \sigma_3 \, ,\, -\sigma_3 \} \, .
\end{gather*}
As in the case of $Q$, the dual $\widehat{D}$ consist of four $1$-dimensional unitary representations and a single $2$-dimensional unitary representation. 
The $2$-dimensional unitary representation is the identity map $\pi (g) = g$ , and the $1$-dimensional representations are given in Table \ref{tab:q}.
The unitary representation
$$
\widetilde{\pi}(g) L := \pi(g) L \pi(g)^\ast \qquad \forall L\in M_2 (\C)
$$
decomposes into the direct sum
$$
M_2 (\C) = \C \id \oplus \C \sigma_1 \oplus \C \sigma_2 \oplus \C \sigma_3 \, , \qquad \widetilde{\pi} = \chi_0 \oplus \chi_1 \oplus \chi_2 \oplus \chi_3 \, .
$$
The contragradient representation $\bar{\pi}$ in the dual space $\C^{2\,\ast}$ of row vectors is defined as
\begin{equation}
\bar{\pi} (g) v^t : = v^t \pi(g^{-1}) \qquad \forall v\in\C^2 \, , 
\end{equation}
and it is equivalent to $\pi$. 
An intertwining operator $V:\C^{2\,\ast}\to\C^2$ is given by
\begin{equation}
V v^t = \sigma_3 v  \qquad \forall v\in \C^2 \, .
\end{equation}
The unitary representation $U$ and the operator $T$ are defined similarly as in \eqref{eq:U} and \eqref{eq:T}, respectively.
Then, Proposition \ref{prop:covPIC3} holds if the condition \eqref{cond:2} is replaced with 
\begin{equation}
|v_1| \neq |v_2| \, .
\end{equation}

\subsection{Dimension $7$}\label{sec:dim7}

If the dimension of the Hilbert space $\hh$ is $7$, then, according to Table \ref{tab:min}, the minimal number of outcomes for a PIC observable is $23$, which is a prime number. The next result shows that, if $p$ is a prime number, there are no covariant PIC observables with $p$ outcomes. This rules out the existence of minimal covariant PIC observables in dimension $7$.

There are other cases, in addition to $d = 7$, which do not admit minimal covariant PIC observables for the same reason. The dimensions whose minimal PIC observables are known to have a prime number of outcomes (and which consequently do not have minimal covariant PIC observables) are listed in Table \ref{tab:prime} up to $d\leq 1000$. This list is calculated using the results from \cite{HeMaWo11}.

\begin{table}
\caption{The dimensions $d\leq 1000$ that are known to have the property that the number of outcomes of a minimal PIC is a prime number.}
\label{tab:prime}
\begin{center}
\begin{tabular}{ | c | c |}
\hline
dimension & min $\#$ of outcomes  \\
\hline
\hline
7 & 23\\
13 & 47\\
19 & 71\\
21 & 79\\
49 & 191\\ 
67 & 263\\ 
69 & 271\\ 
97 & 383\\ 
259 & 1031\\
261 & 1039\\
273 & 1087\\
289 & 1151\\
321 & 1279\\
517 & 2063\\
529 & 2111\\
\hline
\end{tabular}
\end{center}
\end{table}

\begin{proposition}\label{prop:homogeneus}
Let $G$ be a group, $H\subset G$ a proper subgroup and $U$ a projective unitary representation of $G$ in $\hh$. 
Suppose $\# (G/ H)$ is prime. 
If $\Mo:G/H\to\lh$ is a $U$-covariant observable, it is not PIC.
\end{proposition}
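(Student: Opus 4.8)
The plan is to show that $\Mo$ fails to separate some pair of distinct pure states, which is exactly the failure of the PIC property (Definition 1(b), cf.\ Proposition \ref{prop:pic}). The strategy generalizes the mechanism of Proposition \ref{prop:abelian1}: there, two common eigenvectors of all the $U(g)$ produced two pure states on which every covariant observable is flat. Here I cannot expect a common eigenvector for the whole group, but I can exploit the primality of $\#(G/H)$ to single out one group element that acts on the outcome space $\Omega=G/H$ as a full cyclic permutation, and then run the eigenvector argument with respect to that single element.

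First I would produce the cyclic element. Since the $G$-action on $\Omega$ is transitive and $\#\Omega=p$ is prime, the image $\bar G$ of $G$ in the symmetric group $\mathrm{Sym}(\Omega)\cong S_p$ is a transitive subgroup, so by the orbit--stabilizer relation $p$ divides $\#\bar G$. By Cauchy's theorem $\bar G$ contains an element of order $p$; lifting it to some $c\in G$, the permutation induced by $c$ on $\Omega$ has order $p$. An order-$p$ permutation of a $p$-element set can only consist of cycles of length dividing $p$, with at least one cycle of length $p$, hence it is a single $p$-cycle. This group-theoretic step --- extracting a genuine $p$-cycle from primality --- is the crux of the argument; everything after it is a direct computation.

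With $c$ in hand, I would relabel $\Omega$ so that the orbit $\{c^k\cdot(eH)\}_{k=0}^{p-1}$ exhausts $\Omega$, and set $M:=\Mo(eH)$ and $B:=U(c)$. Covariance \eqref{eq:covariance} then gives $\Mo(c^k\cdot(eH))=\widetilde U(c^k)M=U(c)^k M U(c)^{-k}=B^k M B^{-k}$, where passing from $U(c^k)$ to $U(c)^k$ only introduces a unimodular scalar that cancels under conjugation. Thus the list $\Mo(k):=B^k M B^{-k}$, $k=0,\dots,p-1$, is the complete set of values of $\Mo$, and $\sum_k\Mo(k)=\id$. Now $B$ is unitary, hence diagonalizable, and for any unit eigenvector $v$ with $Bv=\beta v$ one computes $B^{-k}\kb vv B^{k}=\kb vv$ because $|\beta|=1$; therefore $\tr{\kb vv\,\Mo(k)}=\tr{\kb vv\,M}$ is independent of $k$, and summing over $k$ forces this constant to equal $1/p$. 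Consequently every eigenstate of $B$ yields the identical outcome distribution $(1/p,\dots,1/p)$. Since $\dim\hh\geq 2$ (the case $\dim\hh=1$ being degenerate), the spectral decomposition of $B$ supplies two linearly independent eigenvectors $v_1,v_2$; the associated pure states $\varrho_i=\kb{v_i}{v_i}$ are distinct yet satisfy $\tr{\varrho_1\Mo(x)}=\tr{\varrho_2\Mo(x)}$ for every $x\in\Omega$, so $\Mo$ is not PIC.
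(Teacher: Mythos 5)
Your proof is correct and takes essentially the same route as the paper's: reduce to a cyclic group acting transitively on the $p$ outcomes, then observe that every eigenstate of the corresponding unitary yields the uniform distribution $(1/p,\dots,1/p)$, so two linearly independent eigenvectors give indistinguishable pure states. The only substantive difference is how the transitive cyclic action is produced: the paper's Lemma \ref{lemma} takes a $p$-Sylow subgroup of $G$, shows it cannot be contained in $H$, and checks that the cyclic group generated by any of its elements outside $H$ acts transitively on $G/H$, whereas you apply Cauchy's theorem to the image of $G$ in $\mathrm{Sym}(G/H)$ and note that an order-$p$ permutation of $p$ points must be a single $p$-cycle --- a slightly more economical argument. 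Your closing computation is a self-contained version of what the paper obtains by citing Proposition \ref{prop:abelian2}(b) (via Example \ref{ex:cyclic}) and Proposition \ref{prop:abelian1}; the unimodular-scalar cancellation you point out is exactly the reason the projective multiplier is harmless in the paper's reduction as well.
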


The proof is a consequence of the following lemma.
Let us first notice that any subgroup $G_0\subseteq G$ acts on the quotient space $G/H$ in the natural way; $x\cdot (gH)=(xg)H$ for all $x\in G_0$ and $g\in G$.
But this action need not be transitive. 

\begin{lemma}\label{lemma}
Suppose $G$ is a group and $H\subseteq G$ a subgroup such that $\# (G/ H)$ is prime. 
There exists a cyclic subgroup $G_0\subset G$ such that the action of $G_0$ on $G/H$ is transitive.
\end{lemma}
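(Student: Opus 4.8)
The plan is to build the cyclic subgroup $G_0$ out of a Sylow $p$-subgroup of $G$, where $p := \#(G/H)$ is the given prime index. The guiding principle is the elementary but rigid fact that a $p$-group acting on a set of exactly $p$ points can only act trivially or transitively, and that when it acts transitively its image in the symmetric group $S_p$ is forced to be cyclic of order $p$.

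First I would record the arithmetic. Since $[G:H]=p$ we have $\#G = p\,\#H$, so $p \mid \#G$ and a Sylow $p$-subgroup $P \subseteq G$ exists, of order $p^a$ with $p^a$ the exact power of $p$ dividing $\#G$. Writing $v_p(\,\cdot\,)$ for the exponent of $p$ in a prime factorization, the relation $\#G = p\,\#H$ gives $v_p(\#G) = 1 + v_p(\#H)$, so the largest $p$-power dividing $\#H$ is $p^{a-1}$. Consequently $H$ contains no subgroup of order $p^a$, and in particular $P \not\subseteq H$.

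Next I would let $P$ act on $X := G/H$ by left translation and study the orbit of the coset $eH$. By the orbit-stabilizer theorem its cardinality divides $\#P = p^a$, hence is a power of $p$; it is also at most $\#X = p$; and it is strictly larger than $1$, because $P \not\subseteq H$ means $P$ does not fix $eH$. The only option compatible with these constraints is that the orbit has size $p$, i.e. $P$ acts transitively on $X$. Passing then to the permutation representation $\phi : P \to \mathrm{Sym}(X) \cong S_p$, the image $\phi(P)$ is a quotient of a $p$-group, hence a $p$-group, and a $p$-subgroup of $S_p$ has order dividing the $p$-part of $p!$, which is exactly $p$. Transitivity excludes the trivial image, so $\phi(P)$ has order $p$ and is therefore cyclic. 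Choosing any $g \in P$ with $\phi(g) \neq \mathrm{id}$, the element $\phi(g)$ generates the order-$p$ group $\phi(P)$; hence $G_0 := \langle g \rangle$ has image $\phi(G_0) = \phi(P)$, which acts transitively, and $G_0$ is the required cyclic subgroup.

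I do not expect a computational obstacle anywhere in this argument; the whole proof is forced once the right object (a Sylow $p$-subgroup) is chosen. The single structurally essential point is the valuation step showing $P \not\subseteq H$: it is precisely the primality of the index, through $v_p(\#G) = 1 + v_p(\#H)$, that makes $P$ too large to fit inside $H$ and thereby guarantees a nontrivial — hence transitive — action. Everything downstream, namely the transitivity of the orbit and the cyclicity of the image in $S_p$, is then automatic from counting in a $p$-group acting on $p$ points, so that is the step I would treat most carefully while the rest can be stated briefly.
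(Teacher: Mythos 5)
Your proof is correct and follows essentially the same route as the paper's: both hinge on taking a Sylow $p$-subgroup $P$ of $G$ for $p=\#(G/H)$ and observing that primality of the index forces $P\not\subseteq H$. The only difference is in the final extraction step --- the paper picks any $g_0\in P\setminus H$ and shows directly that $\#\bigl(\langle g_0\rangle/(\langle g_0\rangle\cap H)\bigr)$ is a power of $p$ squeezed between $p$ and $p$, whereas you first establish transitivity of all of $P$ and then pull a generator back from the order-$p$ image of $P$ in $\mathrm{Sym}(G/H)$ --- and both versions are complete.
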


\begin{proof}
We recall that, if $p$ denotes a prime number, a {\em $p$-Sylow subgroup} $G_p \subseteq G$ is a subgroup whose order is $p^q$ for some integer $q\geq 1$, and $p^q$ is the highest power dividing the order of $G$. For any prime $p$ dividing the order of $G$, there exists a $p$-Sylow subgroup $G_p$ of $G$ (see e.g.~\cite[Theorem 6.1 of Chapter 1]{Lang} or \cite[Theorem 1.7]{Isaacs2}).

Denote $p=\#(G/ H)$, and let $\# G = p^q m$, with $p$ not dividing $m$. Fix a $p$-Sylow subgroup $G_p\subseteq G$. 
Then, $G_p\nsubseteq H$, as otherwise 
\begin{equation*}
\# G = \# (G/ H) \cdot \# (H/ G_p) \cdot \# G_p = p^{q+1} \cdot \# (H/G_p) 
\end{equation*}
and this contradicts the assumption that $\# G = p^q m$, with $p$ not dividing $m$. 
Thus, we can pick $g_0\in G_p$ such that $g_0\notin H$, and we denote by $G_0$ the cyclic subgroup generated by $g_0$. 
Since $G_0$ is a subgroup of $G_p$, we must have $\# G_0 = p^r$ for some $1\leq r\leq q$. 
On the other hand, $G_0 \neq G_0 \cap H$ since $g_0\notin H$. 
It follows that $\# (G_0 / (G_0 \cap H)) = p^s$, with $s\geq 1$. 

We define a map $\Lambda:G_0 /(G_0 \cap H) \to G/H$ by
$$
\Lambda(g(G_0 \cap H)) = gH \quad \forall g\in G_0 \, .
$$
This map is well defined and is an injection, since, for all $g,g'\in G_0$, $g^{-1} g' \in H$ if and only if $g^{-1}g' \in H\cap G_0$.
Therefore, $\# (G_0 /  (G_0 \cap H)) \leq \# (G/ H)$.
Since $\# (G/ H) = p$ and $\#(G_0 / (G_0 \cap H)) = p^s$ with $s\geq 1$, we conclude that $\# (G_0 /  (G_0 \cap H)) = p$.
It follows that $\Lambda$ is a bijection.
It is easy to verify that for $g,g'\in G_0$, we have
\begin{equation*}
g' \cdot\Lambda(g(G_0 \cap H))=g' g G/H \, .
\end{equation*}
Since $G_0$ acts transitively on $G_0 /(G_0 \cap H)$ and $\Lambda$ is a bijection, we see that this formula defines a transitive action of $G_0$ on $G/H$. 
\end{proof}

\begin{proof}[Proof of Proposition \ref{prop:homogeneus}]
Let $\Mo:G/H\to\lh$ be a $U$-covariant observable.
Choose a cyclic subgroup $G_0 \subset G$ as in Lemma \ref{lemma} and let $U'$ be the restriction of $U$ to $G_0$. 
Then, $\Mo$ is a $U'$-covariant observable and not PIC by Proposition \ref{prop:abelian2}.
\end{proof}

We remark that in Lemma \ref{lemma} the condition that $\#(G/ H)$ is prime is essential. 
Indeed, if e.g.~$G = Q$ and $H=\{\id \, ,\, -\id\}$, then $\# (Q/H) = 4$, but there is no cyclic subgroup $G_0 \subseteq Q$ whose action on $Q/H$ is transitive.
Indeed, by direct inspection, one can check that every cyclic subgroup $G_0 \subseteq Q$ has order $2$ or $4$ and contains $H$, so $\# (G_0 /(G_0 \cap H)) = 1$ or $2 \neq 4$.

\section{Conclusions}\label{sec:conclusions}

The minimal number of outcomes for a pure-state informationally complete (PIC) observable has been recently solved in \cite{HeMaWo11}.
One would hope that it is possible to find a minimal PIC observable with some simple mathematical structure, possibly in some symmetric form.
An obvious try is to use covariance with respect to some finite group since this works so nicely in the case of minimal informationally complete observables.
It would provide a simple and easy way to construct minimal observables with the desired property.

We have seen that the approach of using covariant observables fails in general; for some dimensions there are no minimal PIC observables that would be covariant under any finite group.
We conclude that this kind of symmetry is lacking in the duality of observables and pure states, although it exists between observables and all states.

One can turn into a more general question:
Is it possible to have a minimal PIC observable $\Mo$ with all operators having the same set of eigenvalues?
For such $\Mo$ any pair of operators $\Mo(x)$ and $\Mo(x')$ are unitarily equivalent.
Hence, $\Mo$ bears some symmetry although it need not be covariant under any projective unitary representation.

A particularly appealing observable of this type would consists of rank-1 operators.
It has been shown in \cite{Finkelstein04} that in every finite dimension $d$, there exists an observable consisting of $2d$ operators with rank-1 and identifying all pure states up to a measure zero.
This, however, does not answer to the question of rank-1 PIC observables, which are required to identify all states and therefore must have at least $4d-4-\delta(d)$ outcomes \cite{HeMaWo11}.

\section*{Acknowledgements}

The authors wish to thank Jussi Schultz for his comments on an earlier version of this paper.
T.H.~acknowledges financial support from the Academy of Finland (grant no. 138135). 
A.T.~acknowledges the financial support of the Italian Ministry of Education, University and Research (FIRB project RBFR10COAQ).

\newpage

\end{document}